\newcommand{\ud}{\mathrm{d}}
\newcommand{\cD}{{\mathcal D}}
\newtheorem{theorem}{Theorem}
\newtheorem{prop}{Proposition}
\newtheorem{remark}{Remark}
\theoremstyle{definition}
\begin{document}

\thispagestyle{empty}

\vspace*{1cm}

\begin{center}

{\LARGE\bf On the number of isolated eigenvalues of a pair of particles in a quantum wire} \\

\vspace*{2cm}

{\large Joachim Kerner \footnote{E-mail address: {\tt Joachim.Kerner@fernuni-hagen.de}} }%

\vspace*{5mm}

Department of Mathematics and Computer Science\\
FernUniversit\"{a}t in Hagen\\
58084 Hagen\\
Germany\\

\end{center}

\vfill

\begin{abstract} In this note we consider a pair of particles moving on the positive half-line $\mathbb{R}_+$ with the pairing generated by a hard-wall potential. This model was first introduced in \cite{KernerBoundSystem} and later applied to investigate condensation of pairs of electrons in a quantum wire \cite{KernerBoundROMP,KernerPairsElectrons}. For this, a detailed spectral analysis proved necessary and as a part of this it was shown in \cite{KernerBoundROMP} that, in a special case, the discrete spectrum of the Hamiltonian consists of a single eigenvalue only. It is the aim of this note to prove that this is generally the case. 
\end{abstract}

\newpage

\section{Introduction}
In this note we consider an interacting system of two particles with the positive half-line $\mathbb{R}_+=(0,\infty)$ as one-particle configuration space. More explicitly, the Hamiltonian shall be given by 
\begin{equation}\label{GenHamil}
H=-\frac{\partial^2}{\partial x^2}-\frac{\partial^2}{\partial y^2}+v(|x-y|)
\end{equation}
with hard-wall potential, $d > 0$, 
\begin{equation}\label{Potential}
v(x):=\begin{cases}
0 \quad x < d\ , \\
\infty \quad \text{else}\ .
\end{cases}
\end{equation}
Note that, through the potential $v$, the two particles actually form a pair with spatial extension characterised by $d> 0$. 

The two-particle model with Hamiltonian~\eqref{GenHamil} and potential~\eqref{Potential} was introduced in \cite{KernerBoundSystem}. Its investigation grew out of studying many-particle quantum chaos on quantum graphs \cite{BKSingular,BKContact} taking into account recent results in theoretical physics \cite{Queisserunruh}. More generally, due to the technical advances in the last decades and especially in the realm on nanotechnology, it has become pivotal to study the properties of interacting particle systems in one dimension which may differ greatly from those of systems in higher dimension \cite{ThierryGamarchi,GIAMARCHI2016322}. Also, since the pairing of electrons (Cooper pairs) in metals is the key mechanism in the formation of the superconducting phase in type-I superconductors \cite{CooperBoundElectron,BCSI}, an investigation of the Hamiltonian~\eqref{GenHamil} is also interesting from a solid-state physics point of view. And indeed, the condensation of pairs of electrons with Hamiltonian~\eqref{GenHamil} was studied in \cite{KernerBoundROMP,KernerPairsElectrons}; in \cite{KernerBoundROMP} the electrons forming a pair have same spin and in \cite{KernerPairsElectrons} the electrons of a pair have opposite spin as it is the case with Cooper pairs. 

In this note we are interested in spectral properties of the $H$. More explicitly, we are interested in characterising the discrete part of the spectrum. It was the key observation in \cite{KernerBoundSystem} that the discrete spectrum of $H$ is non-trivial, i.e., there exist eigenvalues below the botton of the essential spectrum. Since this is not the case if one changes the one-particle configuration space to be the whole real line $\mathbb{R}$, the existence of a discrete spectrum is directly linked to the geometry of the one- and two-particle configuration space. Implementing exchange symmetry, the authors of \cite{KernerBoundROMP} were able to show that the discrete spectrum actually exists of one eigenvalue only. The main purpose of this note is to show that exchange symmetry is indeed not necessary and that the discrete spectrum always consists of one eigenvalue only.

Finally, we want to draw attention to the recent paper \cite{KernerEggerPankrashkin} in which spectral properties of \eqref{GenHamil} for a large class of interaction potentials $v:\mathbb{R}_+ \rightarrow \mathbb{R}$ where studied. The authors also found that the discrete spectrum is non-empty and contains only finitely many eigenvalues. However, no bounds on the number of isolated eigenvalues were derived. 

\section{The model and main results}
Due to the formal nature of the interaction potential \eqref{Potential}, $H$ cannot be directly realised as a self-adjoint operator on $L^2(\mathbb{R}^2_+)$. However, we see that this choice for $v$ means that the two-particle configuration space is actually given by
\begin{equation}
\Omega=\{(x,y) \in \mathbb{R}^2_+:\ |x-y| < d   \}\ .
\end{equation}
Based on $\Omega$ we then introduce the Hilbert spaces $L^2(\Omega):=L^2_{0}(\Omega)$ as well as
\begin{equation}\begin{split}
L^2_s(\Omega)&:=\{\varphi \in L^2(\Omega): \varphi(x,y)=\varphi(y,x)  \}\ ,\\
L^2_a(\Omega)&:=\{\varphi \in L^2(\Omega): \varphi(x,y)=-\varphi(y,x)  \}\ .
\end{split}
\end{equation}
When describing two distinguishable particles one focusses on $L^2(\Omega)$ while the versions $L^2_{s/a}(\Omega)$ are used if one implements exchange symmetry between the two particles. For example, in \cite{KernerBoundROMP} the authors considered a pair of two electrons with same spin which implies that one has to work on $L^2_a(\Omega)$ since electrons are fermions. Contrary to this, in \cite{KernerPairsElectrons} the electrons were assumed to be of opposite spin which then requires to work on $L^2_s(\Omega)$. 

Now, on any of those Hilbert spaces, $H$ is rigorously realised via its associated (quadratic) form, $j \in \{0,s,a\}$, 
\begin{equation}
q_{j}[\varphi]:=\int_{\Omega}|\nabla \varphi|^2\ \ud x
\end{equation}
with form domain $\cD_{j}=\{\varphi \in H^1(\Omega):\varphi \in L^2_{j}(\Omega)\ \text{and}\ \varphi|_{\partial \Omega_D}=0 \}$ where
\begin{equation}
\partial \Omega_D:=\{(x,y)\in \mathbb{R}^2_+: |x-y|=d \}\ .
\end{equation}
 Note here that $q_j[\cdot]$ obviously is a closed positive form with a dense form domain \cite{BEH08}. Also note that we write $H_j$ for the realisation of $H$ associated with the corresponding form $q_j[\cdot]$. 
\begin{remark} It is clear that the self-adjoint operator associated with $q_j[\cdot]$ is nothing else than a version of the two-dimensional Laplacian $-\Delta$ \cite{GilTru83}.
\end{remark}
In order to formulate our main result we recall the following statement which was proved in \cite{KernerBoundSystem,KernerBoundROMP,KernerPairsElectrons}. We denote by $\sigma_{\mathrm{d}}(\cdot)$ the discrete spectrum. 
\begin{prop}\label{PropI} For every $j \in \{0,s,a\}$ one has $$ \sigma_{\mathrm{d}}(H_j) \neq \emptyset\ . $$
\end{prop}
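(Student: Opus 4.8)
The plan is to prove the stronger statement $\inf\sigma(H_j)<\inf\sigma_{\mathrm{ess}}(H_j)$ by producing a single trial function whose Rayleigh quotient lies below the bottom of the essential spectrum; the min-max principle then forces a discrete eigenvalue. First I would pass to centre-of-mass and relative coordinates $u=(x+y)/\sqrt2$, $v=(x-y)/\sqrt2$, which are orthonormal, so that $-\Delta=-\partial_u^2-\partial_v^2$ and $q_j$ are unchanged. In these coordinates $\Omega=\{(u,v):|v|<a,\ u>|v|\}$ with $a:=d/\sqrt2$, the Dirichlet part $\partial\Omega_D$ becomes $\{|v|=a\}$, the two half-axes become Neumann segments on $\{u=|v|\}$, and exchange symmetry becomes parity in $v$ (so $L^2_s$, $L^2_a$ are the even/odd sectors). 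Thus $\Omega$ is a straight Dirichlet strip $\{u>a,\ |v|<a\}$ capped by a triangle $\{0<u<a,\ |v|<u\}$ whose slanted sides carry Neumann data. Since $\Omega$ coincides with the straight strip outside a compact set, a Neumann-bracketing (Persson) argument gives $\inf\sigma_{\mathrm{ess}}(H_j)=\Lambda_j$, the lowest transverse Dirichlet eigenvalue on $(-a,a)$ compatible with the parity of the sector: $\Lambda_0=\Lambda_s=\pi^2/(2d^2)$, attained by $\chi(v)=\cos(\pi v/2a)$, and $\Lambda_a=2\pi^2/d^2$, attained by $\chi(v)=\sin(\pi v/a)$.

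Next I would take the separated trial function $\varphi(u,v)=\chi(v)f(u)$, with $\chi$ the transverse threshold mode just named and $f$ a longitudinal profile to be chosen. Because $\chi$ is exactly the threshold eigenfunction, the transverse integral over the full strip cancels $\Lambda_j$, and writing $Q_j[\varphi]:=q_j[\varphi]-\Lambda_j\|\varphi\|^2$ the functional collapses to one dimension,
\[
Q_j[\varphi]=\int_0^a\!\big(A(u)f(u)^2+B(u)f'(u)^2\big)\,du+\|\chi\|_{L^2(-a,a)}^2\int_a^\infty\! f'(u)^2\,du,
\]
where $A(u):=\int_{-u}^u(|\chi'|^2-\Lambda_j\chi^2)\,dv$ is the geometry-induced effective potential created by the triangular cap (there the transverse interval $(-u,u)$ is shorter than $(-a,a)$ and closed off by Neumann rather than Dirichlet data), and $B(u):=\int_{-u}^u\chi^2\,dv>0$. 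Choosing $f\equiv1$ on $(0,R)$ with a slow decay thereafter makes the strip term vanish as $R\to\infty$, so everything hinges on the sign of $\int_0^a A(u)\,du$.

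The key computation is explicit. In the symmetric/distinguishable sectors one finds $A(u)=-\Lambda_0\,\tfrac{2a}{\pi}\sin(\pi u/a)$, whence $\int_0^a A\,du=-1<0$; the constant profile already yields $Q_j[\varphi]\to-1<0$, and since even functions also lie in $\cD_0$ the cases $j\in\{0,s\}$ are settled at once. The antisymmetric sector is the genuine obstacle: there $A(u)=\Lambda_a\,\tfrac{a}{\pi}\sin(2\pi u/a)$ changes sign and integrates to exactly $\int_0^a A\,du=0$, so the product ansatz sits precisely at the threshold and the leading term gives no binding. This is the critical one-dimensional situation, and I would extract binding at the next order by perturbing the profile: set $f=f_R-c\,\zeta$ with $\zeta\in C_c^\infty(0,a)$ and $c>0$ small. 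Then
\[
\lim_{R\to\infty}Q_a[\varphi]=-2c\int_0^a A\zeta\,du+c^2\Big(\int_0^a A\zeta^2\,du+\int_0^a B(\zeta')^2\,du\Big),
\]
and choosing $\zeta$ correlated with $A$ (e.g.\ a smooth cut-off of $\sin(2\pi u/a)$) makes $\int_0^a A\zeta\,du>0$, which is possible precisely because $A\not\equiv0$; for $c$ small the negative linear term dominates the $O(c^2)$ remainder and $Q_a[\varphi]<0$.

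In all three sectors this produces $\varphi\in\cD_j$ with $q_j[\varphi]<\Lambda_j\|\varphi\|^2=\inf\sigma_{\mathrm{ess}}(H_j)\,\|\varphi\|^2$, so by min-max $\inf\sigma(H_j)<\inf\sigma_{\mathrm{ess}}(H_j)$ and hence $\sigma_{\mathrm{d}}(H_j)\neq\emptyset$. I expect the only delicate point to be the antisymmetric case, where the binding comes not from the transverse geometry at leading order but from the second-order response of the longitudinal profile to the zero-mean effective potential $A$; the remaining steps are routine one-dimensional estimates.
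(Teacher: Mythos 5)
You should first note that this paper contains no proof of Proposition~1 at all: it is explicitly recalled from the earlier works cited there, and the technique the paper itself develops (reflection into the cross-shaped domain $\widetilde{\Omega}$ plus Neumann bracketing) yields inequalities of the form $\mu_n(H_j)\geq \mu_n(-\widetilde{\Delta}^{(D)}_{d})$, i.e.\ it bounds the number of eigenvalues from \emph{above}; that method alone can never produce existence. Your variational argument is therefore a genuinely different, self-contained route of the standard ``binding in waveguides'' type, and it is correct. I checked the substance: after the rotation $u=(x+y)/\sqrt2$, $v=(x-y)/\sqrt2$ the domain, the Dirichlet and natural (Neumann) boundary pieces, and the parity reading of exchange symmetry are as you state; your thresholds $\Lambda_0=\Lambda_s=\pi^2/(2d^2)$ and $\Lambda_a=2\pi^2/d^2$ agree with the values of $\inf\sigma_{\mathrm{ess}}(H_j)$ the paper quotes from the literature; the reduction of $q_j[\varphi]-\Lambda_j\|\varphi\|^2$ to the one-dimensional functional with $A(u)=\int_{-u}^{u}\bigl(|\chi'|^2-\Lambda_j\chi^2\bigr)\,\ud v$ and $B(u)=\int_{-u}^{u}\chi^2\,\ud v$ is exact for product trial functions; and the computations $\int_0^a A\,\ud u=-1$ in the even/distinguishable sectors and $\int_0^a A\,\ud u=0$ with $A\not\equiv 0$ in the odd sector are right. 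Your handling of the odd sector is precisely the delicate point: the product ansatz sits exactly at threshold, and the perturbation $f=f_R-c\zeta$ with $\int_0^a A\zeta\,\ud u>0$ extracts strict negativity at second order, which suffices. Two things to make explicit in a final write-up: (i) the identification $\inf\sigma_{\mathrm{ess}}(H_j)=\Lambda_j$ should either be proved via the Persson/bracketing argument you sketch or simply cited, as the paper does; (ii) the quantifier order in the odd sector — fix $\zeta$, then $c$ small so that the limit functional is strictly negative, and only then take $R$ large so that the strip error $\|\chi\|^2_{L^2(-a,a)}\int_a^\infty (f_R')^2\,\ud u$ is absorbed — should be stated, though your formulas support it.
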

It is our goal in this note to prove the following result. 
\begin{theorem}\label{MainResult} For every $j \in \{0,s,a\}$ one has
	$$ \sigma_{\mathrm{d}}(H_j)=\{E_j  \}\  $$
	with some $E_j \geq 0$ which is an eigenvalue of multiplicity one. In other words, the discrete spectrum consists of one eigenvalue only.
\end{theorem}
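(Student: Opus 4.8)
The plan is to pass to centre-of-mass and relative coordinates $u=(x+y)/\sqrt{2}$, $v=(x-y)/\sqrt{2}$, under which $-\Delta$ retains its form and $\Omega$ becomes the funnel-shaped region $\widetilde\Omega=\{(u,v):\ u>|v|,\ |v|<a\}$ with $a=d/\sqrt{2}$: the Dirichlet lines $|x-y|=d$ become $v=\pm a$, while the axes $x=0$ and $y=0$ become the Neumann wedge $u=|v|$. The exchange $(x,y)\mapsto(y,x)$ is the reflection $v\mapsto-v$, so $L^2_s$ and $L^2_a$ are the even and odd sectors in $v$ and $H_0=H_s\oplus H_a$. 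Outside a compact set $\widetilde\Omega$ coincides with the straight Dirichlet half-strip $\{u>a,\ |v|<a\}$, so by the standard decomposition (Weyl) argument one reads off $\inf\sigma_{\mathrm{ess}}(H_0)=\inf\sigma_{\mathrm{ess}}(H_s)=\pi^2/(2d^2)=:\Sigma$ from the lowest transverse Dirichlet mode on $(-a,a)$, whereas $\inf\sigma_{\mathrm{ess}}(H_a)=2\pi^2/d^2=4\Sigma$ is governed by the lowest odd transverse mode.

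First I would settle simplicity and the interrelation of the three statements. Since $\widetilde\Omega$ is connected, the Dirichlet--Neumann heat semigroup $e^{-tH_0}$ is positivity improving, hence $E_0=\inf\sigma(H_0)$ is a simple eigenvalue with a strictly positive eigenfunction $\psi_0$; as $H_0$ commutes with $v\mapsto-v$ and $\psi_0>0$, the eigenfunction $\psi_0$ is even, so $E_0=E_s=\inf\sigma(H_s)$. Because $H_s$ is the restriction of $H_0$ to the invariant subspace $L^2_s$ and the two thresholds coincide, every eigenvalue of $H_s$ below $\Sigma$ is an eigenvalue of $H_0$ below $\Sigma$; thus $\sigma_{\mathrm{d}}(H_s)\subseteq\sigma_{\mathrm{d}}(H_0)$, and it suffices to prove $\sigma_{\mathrm{d}}(H_0)=\{E_0\}$ for the distinguishable case, the symmetric case then being automatic. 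The antisymmetric case has the strictly larger threshold $4\Sigma$ and must be treated separately; after folding onto the half-funnel $\{0<v<a,\ u>v\}$ with Dirichlet on $v=0$ it again has, by positivity, a simple positive ground state $E_a$.

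The heart of the matter is to show that each sector carries at most one eigenvalue below its threshold; combined with Proposition~\ref{PropI} and the simplicity just established, this yields the theorem. For $H_0$ I would argue that below $\Sigma$ only the lowest transverse channel is active: writing a function on the strip part as $f(u)\chi_1(v)+\varphi_\perp$ with $\varphi_\perp$ transversally orthogonal to the ground mode $\chi_1$, the form bounds $\varphi_\perp$ from below by the next transverse eigenvalue $4\Sigma>\Sigma$, so the higher channels produce no spectrum below $\Sigma$. Eliminating them by a Feshbach/Grushin reduction (legitimate since $E<\Sigma<4\Sigma$) reduces the count of eigenvalues below $\Sigma$ to that of an effective half-line Schr\"odinger operator in $u$ together with a matching condition at $u=a$ encoding the tip, and a Sturm oscillation argument bounds the number of its eigenvalues below $\Sigma$ by one. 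The same channel reduction, run in the odd sector with ground mode $\sin(\pi v/a)$ and threshold $4\Sigma$, disposes of $H_a$.

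The main obstacle is the triangular tip $\{u<a\}$, where the Neumann boundary $u=|v|$ is not a level set of the channel variable and the transverse decomposition degenerates: the channel/Feshbach reduction is clean only on $\{u\ge a\}$, and the tip must be absorbed into the effective boundary condition at $u=a$. Controlling that condition --- showing it depends monotonically on $E$ in the direction that lets Sturm theory force at most one bound state, which in particular entails $E_a\ge\Sigma$ so that the antisymmetric state never re-enters $\sigma_{\mathrm{d}}(H_0)$ --- is the delicate point. A more robust but lossier fallback is Dirichlet--Neumann bracketing across the cut $u=a$, estimating separately the bounded tip (which one must show carries at most one Dirichlet eigenvalue below $\Sigma$) and the straight half-strip (which carries none); sharpening this bracketing to the exact count is itself the crux.
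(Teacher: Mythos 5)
Your geometric setup is correct and your architecture --- the rotated coordinates, the thresholds $\Sigma=\pi^2/(2d^2)$ and $2\pi^2/d^2$, the sector decomposition, and the scheme ``at most one eigenvalue below threshold, plus Proposition~\ref{PropI}, plus simplicity'' --- is exactly the right skeleton. The genuine gap is that the central counting step is never actually proved. Your primary route (Feshbach/Grushin channel reduction plus Sturm oscillation) is only a program: the effective, energy-dependent boundary condition at $u=a$ encoding the wedge tip, and its monotonicity in $E$, are precisely the hard part, and you yourself flag them as an unresolved ``delicate point''. Your fallback (bracketing at the cut $u=a$) is likewise left uncomputed, with the remark that sharpening it to the exact count ``is itself the crux''. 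So as written, nothing in the proposal establishes that any of the three operators has at most one eigenvalue below its essential spectrum, and the theorem is not proved.

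What you seem not to have noticed is that the fallback you dismiss as lossy is not lossy at all: carried out, it closes the proof in a few lines, and it is essentially the paper's own argument (the paper first unfolds $\Omega$ by even reflection across both axes onto a cross-shaped domain and then brackets with the central square; that is the same computation in the unfolded picture). Cut at $u=a$ and impose \emph{Neumann} conditions on the cut --- not Dirichlet, as you wrote: an upper bound on the counting function requires relaxing to Neumann. For $j\in\{0,s\}$ the tip is the right isosceles triangle with vertices $(0,0)$, $(a,a)$, $(a,-a)$, with legs of length $d$ and now Neumann conditions on all sides; reflecting across the Neumann hypotenuse-diagonal shows its spectrum is $\{(\pi^2/d^2)(m^2+n^2):\,0\le m\le n\}=\{0,\ \pi^2/d^2,\ 2\pi^2/d^2,\dots\}$, so exactly one eigenvalue lies below $\Sigma=\pi^2/(2d^2)$, while the half-strip piece (Dirichlet on $v=\pm a$, Neumann at the cut) has spectrum $[\Sigma,\infty)$. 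Hence $N(H_j,E)\le 1$ for $E<\Sigma$, and Proposition~\ref{PropI} gives exactly one eigenvalue; multiplicity one is automatic because the bracketing bound counts with multiplicity, so even your positivity argument is dispensable. For $j=a$, fold onto $\{0<v<u,\ v<a\}$ and cut at $u=a$: the tip triangle carries Dirichlet on $v=0$ and Neumann on $u=v$ and $u=a$; unfolding across the Neumann diagonal identifies its eigenvalues as $(2\pi^2/d^2)\bigl[(m+\tfrac12)^2+(n+\tfrac12)^2\bigr]$ with $0\le m\le n$, i.e.\ $\pi^2/d^2,\ 5\pi^2/d^2,\dots$, of which exactly one lies below the odd threshold $2\pi^2/d^2$. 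This settles all three cases; the Feshbach route you favoured would still require the tip analysis you correctly identified as delicate.
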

\section{Proof of Theorem~\ref{MainResult}}
In this section we establish a proof of Theorem~\ref{MainResult}. We note that this was already proved for $j=a$ in \cite{KernerBoundROMP} but for the sake of completeness we also include a proof thereof.

In a first step we establish an auxiliary result: We define the domain
\begin{equation}
\widetilde{\Omega}:=\{(x,y) \in \mathbb{R}^2: |x-y| < d \}\  \cup \ \{(x,y) \in \mathbb{R}^2: |x+y| < d  \}\ 
\end{equation}
and introduce on $L^2(\widetilde{\Omega})$ the two-dimensional Laplacian $-\Delta$ with Dirichlet boundary conditions along $\partial \widetilde{\Omega}$. We denote this operator by $-\widetilde{\Delta}^{(D)}_d$. Note that the quadratic form associated with $-\widetilde{\Delta}^{(D)}_d$ is given by
\begin{equation*}
\widetilde{q}_d[\varphi]:=\int_{\widetilde{\Omega}}|\nabla \varphi|^2\ \ud x
\end{equation*}
with form domain $\widetilde{\cD}_d:=\{\varphi \in H^1(\widetilde{\Omega}): \varphi|_{\partial \widetilde{\Omega}}=0  \}$. 
\begin{prop}\label{AuxRes} The discrete spectrum of the self-adjoint operator $-\widetilde{\Delta}^{(D)}_d$ consists of exactly one eigenvalue with multiplicity one.
\end{prop}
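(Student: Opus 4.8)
The plan is to reduce everything to a two-sided estimate coming from Dirichlet--Neumann bracketing, after first putting $\widetilde\Omega$ into a transparent form. I would pass to the rotated coordinates $u=(x+y)/\sqrt2$, $v=(x-y)/\sqrt2$, which is an orthogonal change of variables and hence preserves both $-\Delta$ and the $L^2$-structure. In these coordinates the two defining strips become $\{|v|<a\}$ and $\{|u|<a\}$ with $a:=d/\sqrt2$, so $\widetilde\Omega$ is the symmetric ``plus''-shaped region $\{|u|<a\}\cup\{|v|<a\}$, namely a central square $B:=(-a,a)^2$ glued to four Dirichlet half-strip arms of width $2a$. The relevant threshold is $\Lambda:=\pi^2/(2d^2)=(\pi/2a)^2$, the lowest transverse Dirichlet eigenvalue of a strip of width $2a$. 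A first (easy) step is to identify $\inf\sigma_{\mathrm{ess}}(-\widetilde{\Delta}^{(D)}_d)=\Lambda$: cutting $\widetilde\Omega$ along the four segments separating $B$ from the arms, each arm has purely essential spectrum $[\Lambda,\infty)$ while the bounded square contributes only discrete spectrum, so both the Dirichlet- and Neumann-decoupled operators have $\inf\sigma_{\mathrm{ess}}=\Lambda$, pinning the value for the cross. Thus ``discrete spectrum'' means exactly ``eigenvalues strictly below $\Lambda$''.

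The heart of the proof is an upper bound of one on the number of such eigenvalues, obtained by Neumann bracketing. Imposing extra Neumann conditions on the four internal cuts produces a decoupled operator $A$ with $A\le-\widetilde{\Delta}^{(D)}_d$ in the form sense, whence $\lambda_k(A)\le\lambda_k(-\widetilde{\Delta}^{(D)}_d)$ for all $k$ by the min--max principle. On the square all four sides are cuts, so the piece of $A$ on $B$ is the Neumann Laplacian on $(-a,a)^2$, with eigenvalues $(\pi/2a)^2(m^2+n^2)$, $m,n\ge0$; its two lowest values are $0$ and $(\pi/2a)^2=\Lambda$. Each arm, being a Dirichlet half-strip, contributes spectrum $[\Lambda,\infty)$ with no eigenvalue below $\Lambda$. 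Hence $\lambda_1(A)=0$ while $\lambda_2(A)=\Lambda$, so $\lambda_2(-\widetilde{\Delta}^{(D)}_d)\ge\Lambda$. This means $-\widetilde{\Delta}^{(D)}_d$ has at most one eigenvalue below $\Lambda$ counting multiplicity, and in particular any such eigenvalue is automatically simple.

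It remains to produce one eigenvalue below $\Lambda$. The cleanest route invokes Proposition~\ref{PropI}: the operator $H_0$ on $\Omega$ is unitarily equivalent to the sector of $-\widetilde{\Delta}^{(D)}_d$ consisting of functions even under the two reflections $(x,y)\mapsto(x,-y)$ and $(x,y)\mapsto(-x,y)$ of $\widetilde\Omega$, of which $\Omega$ is a fundamental domain. Extending an eigenfunction of $H_0$ by even reflection yields an eigenfunction of $-\widetilde{\Delta}^{(D)}_d$ with the same eigenvalue, which by Proposition~\ref{PropI} lies strictly below $\Lambda$. (Alternatively one can exhibit a trial function concentrated near the junction and check directly that its Rayleigh quotient is below $\Lambda$.) Combining this with the bracketing bound gives exactly one eigenvalue, of multiplicity one, which is the assertion of Proposition~\ref{AuxRes}.

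I expect the main obstacle to be getting the bracketing geometry in Step~2 exactly right, so that the \emph{second} Neumann eigenvalue of the central square coincides precisely with the continuum threshold $\Lambda$; it is this exact coincidence that upgrades a generic ``finitely many eigenvalues'' bound to ``at most one''. A secondary point needing care is verifying that the arms carry no eigenvalue below $\Lambda$ (the transverse Dirichlet ground mode sits exactly at threshold) and that the four cuts split $\widetilde\Omega$ cleanly into the square and the four half-strips.
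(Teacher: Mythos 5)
Your proposal is correct and takes essentially the same approach as the paper: after the $45^\circ$ rotation it performs exactly the paper's Neumann bracketing into the central square $(-d/\sqrt{2},d/\sqrt{2})^2$ plus four arms, the key point in both being that the square's second Neumann eigenvalue coincides with the continuum threshold $\pi^2/(2d^2)$ while the arms contribute nothing below it. The only place you go beyond the paper is the existence step, which the paper delegates to the cited literature on crossed waveguides, whereas you obtain it from Proposition~\ref{PropI} by even--even reflection of an $H_0$-eigenfunction; this is a correct, self-contained supplement rather than a different route.
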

\begin{proof} Since the Laplacian is invariant under rotations as well as translations, we may prove the statement by considering the Dirichlet Laplacian on a rotated version of $\widetilde{\Omega}$. Namely, we consider the Dirichlet Laplacian on the ``cross-shaped'' domain 
	\begin{equation*}\begin{split}
	\Omega_0:=&\{(x,y) \in \mathbb{R}^2:\ -\infty < y < \infty \ , -d/\sqrt{2}< x < +d/\sqrt{2} \} \\
	&\quad \cup \{(x,y) \in \mathbb{R}^2:\ -\infty < x < \infty \ , -d/\sqrt{2}< y < +d/\sqrt{2} \}\ .
	\end{split}
	\end{equation*}
We denote this operator by $-\Delta^{(0)}_D$. We then employ a bracketing argument and for this we introduce the direct sum of Laplacians $-\Delta_1 \oplus -\Delta_2 $; here $-\Delta_1$ is the two-dimensional Laplacian defined on the bounded domain (square)
\begin{equation*}
\Omega_1:= (-d/\sqrt{2},+d/\sqrt{2}) \times (-
d/\sqrt{2},+d/\sqrt{2})
\end{equation*}
with Neumann boundary conditions along $\partial \Omega_1$. $-\Delta_2$ denotes the two-dimensional Laplacian on the domain 
\begin{equation*}
\Omega_2:= \overset{\circ}{(\Omega_0 \setminus \Omega_1)}\
\end{equation*}
with Neumann boundary conditions along the boundary segments adjacent to $\Omega_1$ and Dirichlet boundary conditions elsewhere. Most importantly, in terms of operators we obtain the inequality
\begin{equation*}
-\Delta_1 \oplus -\Delta_2 \leq -\Delta^{(0)}_D 
\end{equation*}
which implies $N(-\Delta^{(0)}_D, E) \leq N(-\Delta_1 \oplus -\Delta_2,E)$ with $N(\cdot,E)$ denoting the counting function that counts the number of eigenvalues up to energy  $E < \inf \sigma_{\mathrm{ess}}(-\Delta_1 \oplus -\Delta_2)$. Here $\sigma_{\mathrm{ess}}(\cdot)$ denotes the essential spectrum.

From the definition of $\Omega_2$ it readily follows that
\begin{equation*}
N(-\Delta_1 \oplus -\Delta_2,E)=N(-\Delta_1,E)
\end{equation*}
whenever $E < \inf \sigma_{\mathrm{ess}}(-\Delta_1 \oplus -\Delta_2)$. The reason for this is that $\inf \sigma_{\mathrm{ess}}(-\Delta_1 \oplus -\Delta_2)=\inf \sigma_{\mathrm{ess}}(-\Delta_2)=\inf \sigma(-\Delta_2)=\frac{\pi^2}{2d^2}$, see \cite{ExnerLXXX,KernerBoundROMP} for more details (note that the spectrum of $-\Delta_1$ is purely discrete and $\Omega_2$ consists of four rectangular parts for which a separation of variables can be employed to determine the (essential) spectrum directly).

Now, in order to study $N(-\Delta_1,E)$ we take advantage of the fact that $\Omega_1$ is a square. Hence, we can employ a separation of variables which allows us to determine the eigenvalues of $-\Delta_1$ explicitly. Namley,
 \begin{equation*}
 \sigma_{\mathrm{d}}(-\Delta_1)=\left\{0,\frac{\pi^2}{2d^2},\frac{\pi^2}{d^2},... \right\}\ .
 \end{equation*}
Hence, it follows that $N(-\Delta_1,E)=1$ for all $E < \frac{\pi^2}{2d^2}$ which proves the statement taking into account that $\inf \sigma_{\mathrm{ess}}(-\Delta^{(0)}_D)=\frac{\pi^2}{2d^2}$, see
\cite{KernerBoundROMP}.
\end{proof}

\begin{proof}(of Theorem~\ref{MainResult}) We first consider the cases $j \in \{0,s\}$: We introduce the (injective) linear map 
	\begin{equation*}
	I_j:\cD_j \rightarrow \widetilde{\cD}_d \ ,
	\end{equation*}
	where $I_j\varphi$ is constructed as follows: one takes $\varphi \in \cD_j$ and then reflects it across the $y$-axis. This new function (consisting of the original $\varphi$ and the new reflected part) is then reflected another time across the $x$-axis, finally yielding an element of $\widetilde{\cD}_d$. Now, from the min-max principle we then conclude that, $n\in \mathbb{N}$ ,
	\begin{equation*}\begin{split}
	\mu_n(H_j)&=\inf_{W_n \subset \cD_j}\sup_{0\neq \varphi \in W_n}\frac{q_j[\varphi]}{\|\varphi\|^2_{L^2(\Omega)}}\\ 
	&\geq \inf_{W_n \subset I_j\cD_j}\sup_{0\neq I_j\varphi \in W_n}\frac{\widetilde{q}_d[I_j\varphi]}{\|I_j\varphi\|^2_{L^2(\widetilde{\Omega})}} \\
	&\geq \inf_{W_n \subset \widetilde{\cD}_d}\sup_{0\neq \varphi \in W_n}\frac{\widetilde{q}_d[\varphi]}{\|\varphi\|^2_{L^2(\widetilde{\Omega})}}=\mu_n(-\widetilde{\Delta}^{(D)}_d)\ ,
	\end{split}
	\end{equation*}
	where $\mu_n(\cdot)$ denotes the $n$-th min-max ``eigenvalue'' \cite{BEH08,NonnenmacherSkript}. Also, $W_n$ refers to $n$-dimensional subspaces. 
	From Proposition~\ref{AuxRes} it follows that $\mu_1(-\widetilde{\Delta}^{(D)}_d)$ is the only eigenvalue in the discrete spectrum and $\mu_n(-\widetilde{\Delta}^{(D)}_d)=\inf \sigma_{\mathrm{ess}}(-\widetilde{\Delta}^{(D)}_d)=\frac{\pi^2}{2d^2}$ for $n > 1$. Hence, by Proposition~\ref{PropI} we conclude that only $\mu_1(H_j) < \frac{\pi^2}{2d^2}$ which yields the statement since both essential spectra start at $\frac{\pi^2}{2d^2}$ as shown in \cite{KernerBoundSystem,KernerPairsElectrons}.
	
	In a next step we consider the case $j=a$: Again we want to make use of the min-max principle and hence introduce the (injective) linear map
	\begin{equation*}
	I_a:\cD_a \rightarrow \widetilde{\cD}_{d/2}
	\end{equation*}
	which acts as follows: to obtain $I_a\varphi$ one first restricts $\varphi \in \cD_a$ to
	\begin{equation}
	\{(x,y) \in \mathbb{R}^2_+:\ |x-y| < d \ \text{and}\ y > x  \}\ .
	\end{equation}
	This restriction is then reflected across the axis $x=0$ and then both segments are translated in the negative $y$-direction by $d/2$. Finally, we can extend this (translated) function by zero to obtain an element of $\widetilde{\cD}_{d/2}$. Now, employing the min-max principle shows that
		\begin{equation*}\begin{split}
		\mu_n(H_a)&=\inf_{W_n \subset \cD_a}\sup_{0\neq \varphi \in W_n}\frac{q_a[\varphi]}{\|\varphi\|^2_{L^2(\Omega)}}\\ 
		&\geq \inf_{W_n \subset I_a\cD_a}\sup_{0\neq I_a\varphi \in W_n}\frac{\widetilde{q}_d[I_a\varphi]}{\|I_a\varphi\|^2_{L^2(\widetilde{\Omega})}} \\
		&\geq \inf_{W_n \subset \widetilde{\cD}}\sup_{0\neq \varphi \in W_n}\frac{\widetilde{q}_d[\varphi]}{\|\varphi\|^2_{L^2(\widetilde{\Omega})}}=\mu_n(-\widetilde{\Delta}^{(D)}_{d/2})\ .
		\end{split}
		\end{equation*}
		For $n > 1$, $\mu_n(-\widetilde{\Delta}^{(D)}_{d/2})=\inf \sigma_{\mathrm{ess}}(-\widetilde{\Delta}^{(D)}_{d/2})=\frac{2\pi^2}{d^2}$ and since it was shown in \cite{KernerBoundROMP} that also $\inf \sigma_{\mathrm{ess}}(H_a)=\frac{2\pi^2}{d^2}$, we conclude the statement.
\end{proof}

\vspace*{0.5cm}

\subsection*{Acknowledgement}{} It is a great pleasure to thank S.~Egger (Haifa) for useful remarks and interesting discussions.

\vspace*{0.5cm}

{\small
\bibliographystyle{amsalpha}
\bibliography{Literature}}

\end{document}